\documentclass[10pt,conference]{IEEEtran}

\usepackage{amssymb}
\usepackage{amsmath}
\usepackage{amsmath,bm}
\usepackage{amsthm}
\usepackage{graphicx}
\usepackage{subfigure}
\usepackage{cite}
\usepackage{enumerate}
\usepackage{url}
\usepackage{bm}
\usepackage[ruled]{algorithm2e}
\usepackage{setspace}
\usepackage{color, soul}
\usepackage{epstopdf}
\usepackage[noend]{algpseudocode}
\usepackage{float}
\DeclareMathOperator*{\argmax}{argmax}
\usepackage{booktabs}
\usepackage{threeparttable}
\usepackage{longtable}
\usepackage{rotating}
\usepackage{multirow}
\usepackage{mathrsfs}
\usepackage{array}
\usepackage{stfloats}

\newtheorem{prop}{Proposition}

\setlength{\textfloatsep}{0pt}

\def\BibTeX{{\rm B\kern-.05em{\sc i\kern-.025em b}\kern-.08em
    T\kern-.1667em\lower.7ex\hbox{E}\kern-.125emX}}
\begin{document}

\iffalse
\newtheorem{definition}{\bf~~Definition}
\newtheorem{theorem}{\bf~~Theorem}
\newtheorem{observation}{\bf~~Observation}
\newtheorem{proposition}{\bf~~Proposition}
\newtheorem{remark}{\bf~~Remark}
\newtheorem{lemma}{\bf~~Lemma}
\fi

\title{Near-Far Field Codebook Design for IOS-Aided Multi-User Communications}

\author{
\IEEEauthorblockN{Shupei Zhang\IEEEauthorrefmark{1}, Yutong Zhang\IEEEauthorrefmark{1}, Boya Di\IEEEauthorrefmark{1}}

\IEEEauthorblockA{\IEEEauthorrefmark{1}State Key Laboratory of Advanced Optical Communication Systems and Networks, Peking University, Beijing, China.}

\IEEEauthorblockA{Email: zhangshupei@pku.edu.cn, yutongzhang@pku.edu.cn, diboya@pku.edu.cn.}
}

\maketitle
\begin{abstract}
Recently, the rapid development of metasurface facilitates the growth of extremely large-scale antenna arrays, making the ultra-massive MIMO possible.
In this paper, we study the codebook design and beam training for an intelligent omni-surface~(IOS) aided multi-user system, where the IOS is a novel metasurface enabling simultaneous signal reflection and refraction.
To deal with the near field expansion caused by the large-dimension of IOS, we design a near-far field codebook to serve users both in the near and far fields without prior knowledge of user distribution.
Moreover, to fully exploit the dual functionality of the IOS, the coupling between the reflective and refractive signals is analyzed theoretically and utilized in the codebook design, thereby reducing the training overhead.
On this basis, the multi-user beam training is adopted where each codeword covers multiple areas to enable all users to be trained simultaneously. 
Simulation results verify our theoretical analysis on the reflective-refractive coupling.
Compared to the state-of-the-art schemes, the proposed scheme can improve the sum rate and throughput.
\end{abstract}

\begin{IEEEkeywords}
Near-far field, codebook design, intelligent omni-surface.
\end{IEEEkeywords}
\vspace{-1.4em}
%%%%%%%%%%%%%%%%%%%%%%%%%%%%%%%%%%%%%%%%%%%%%%%
\section{Introduction\label{sec:intro}}%%%%%%%%
%%%%%%%%%%%%%%%%%%%%%%%%%%%%%%%%%%%%%%%%%%%%%%%
\vspace{-0.4em}
With the explosive growth in the number of mobile devices and the rapid development of emerging applications, future wireless communications expect a new technique to provide high-speed and seamless data services~\cite{6G}.
The intelligent omni-surface (IOS) is an emerging technique to enhance wireless communications that allows simultaneous signal reflection and refraction, thereby achieving full-dimensional wireless communications~\cite{IOS}.
However, numerous IOS elements lead to massive IOS aided links between BS and users, which makes the accurate channel state information (CSI) hard to acquire.

To deal with this problem, in the literature, existing works have designed beamforming schemes without perfect CSI via codebook design and beam training to match either the near-field~\cite{dai},~\cite{hierarchical} or far-field channel models~\cite{Jamali},~\cite{tradeoff}.
Authors in~\cite{dai} proposed the near-field beam training scheme for the reflective metasurface (i.e.,~RIS) aided system based on a hierarchical near-field codebook.
In~\cite{Jamali}, authors proposed a quadratic phase-shift design for the IOS which yields a higher power efficiency.

Nevertheless, the pure near-field and far-field codebooks in existing works may not be adaptive to the increasing number of IOS elements.
Specifically, due to the large dimension of the IOS, the boundary between the near field and the far field expands accordingly, and users are likely to distribute in both the near and far fields of the IOS~\cite{Rayleigh}.
The existing pure near-field and far-field codebooks will mismatch users without prior knowledge of user distribution, leading to severe performance loss.
Therefore, it is crucial to design a near-far field codebook applicable to all users.

In this paper, we consider a large-scale IOS aided multi-user system where the codebook and beam training scheme are designed for users in both near and far fields.
Unlike the traditional reflection-only RIS, there exist amplitude and phase couplings between the reflective and refractive coefficients of the IOS elements~\cite{phasecouple}.
Therefore, the reflective and refractive beams generated by an IOS phase configuration are dependent in terms of the beam directions.
Each codeword no longer corresponds to only one direction, but to a reflective direction and a refractive direction coupled to each other, which needs to be derived and considered in the codebook design for the IOS system.
This coupling can be exploited to reduce the number of codewords, which lowers the training overhead.
Meanwhile, this coupling splits the energy into reflective and refractive signals inherently, potentially reducing the signal power.
Therefore, it is essential to consider the coupling of IOS signals in beamforming.

New challenges have arisen in such a system.
$First$, due to the different properties of the near-field and far-field channels and unknown user distribution, it is non-trivial to design a codebook to be applied to all users in any location.
$Second$, unlike traditional reflective metasurface, the coupling between the reflective and refractive signals of IOS needs to be derived and considered in beamforming, introducing difficulty to the codebook design and beam training.
By addressing these challenges, we contribute to the state-of-the-art in the following~ways.
\begin{enumerate}
\item Considering an IOS aided system, we analyze the amplitude and phase coupling of the IOS elements theoretically, unveiling the symmetry relationship between the reflective and refractive signals of IOS in both near and far fields, which is applicable to any IOS structure.
\item Given the symmetry relationship between the reflective and refractive signals, we design a near-far field codebook to serve users in both near and far regions without prior knowledge of user distribution.
Benefiting from the symmetry relationship between the reflective and refractive signals, the codebook size is halved to reduce the beam training overhead.
Based on that, the multi-user beam training mechanism is adopted to serve multiple users~simultaneously.
\item Simulation results verify the theoretical analysis on the coupling between the reflective and refractive signals of IOS.
Moreover, the proposed scheme outperforms the state-of-the-art codebooks in terms of the sum rate and throughput and performs close to the perfect CSI case.
\end{enumerate}
\vspace{-1em}
%%%%%%%%%%%%%%%%%%%%%%%%%%%%%%%%%%%%%%%%%%%%%%%
\section{System Model\label{sec:model}}%%%%%%%%
%%%%%%%%%%%%%%%%%%%%%%%%%%%%%%%%%%%%%%%%%%%%%%%
\vspace{-0.4em}
In this section, an IOS-aided multi-user communication system is first introduced and modeled, and then the amplitude and phase coupling is described.
\vspace{-0.5em}
\subsection{Scenario Description}
\vspace{-0.3em}
As shown in Fig.~\ref{Fig:scenario}, we consider a downlink multi-user wireless system where a~$N_b$-element uniform planar array~(UPA) equipped base station~(BS) serves~$K$ $N_u$-antenna users.
We deploy an IOS to transmit the signals from the BS to users utilizing the dual functions of reflection and refraction.

However, with an extremely large dimension of the IOS, the near-field region expands accordingly.
Take the Rayleigh distance~$\frac{2d^2}{\lambda}$ as the boundary between the near and far fields of the IOS
\footnote{Here we use the Rayleigh distance to define a near-far field boundary which has been widely adopted.}
, where $d$ and $\lambda$ represent the largest dimension of the IOS and the wavelength of the signal, respectively~\cite{Rayleigh}.
For example, with a $0.5m\times0.5m$ IOS, the Rayleigh distance is more than 80m at 26 GHz.
Hence, users are likely to spread out in both the near and far fields of the IOS, where the former introduces extra electromagnetic response variations among different radiation elements, such that we model the IOS aided system by considering the different electromagnetic characteristics of the near and far fields.
\begin{figure}[t]
\setlength{\abovecaptionskip}{0pt}
\setlength{\belowcaptionskip}{0pt}
	\centering
    \includegraphics[width=0.38\textwidth]{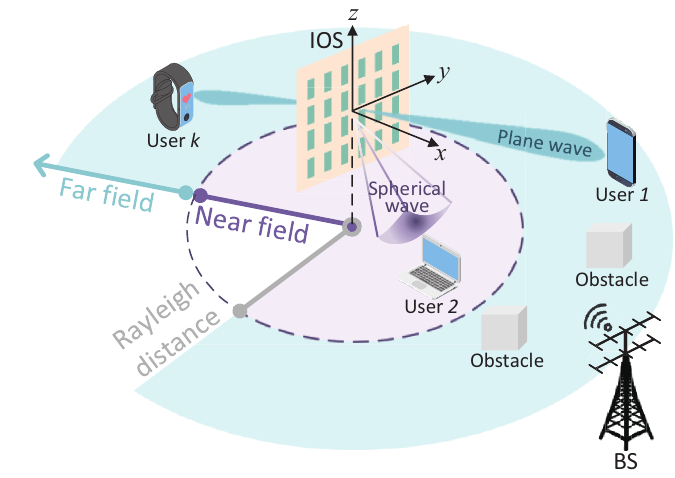}
	\caption{IOS aided multi-user communication system.}
	\label{Fig:scenario}
\vspace{0em}
\end{figure}
\vspace{0.6em}
\subsection{Coupling Between the Reflective-Refractive Beams at IOS}
\vspace{-0.2em}
We deploy an extremely large-scale IOS with~$L=L^h~\times~L^v$ elements between the BS and users to provide reflective/refractive links.
The reflective/refractive waves of IOS can be tuned toward the direction of users by manipulating the phase of each element.
The reflective and refractive signals may be coupled with each other, which is discussed as follows.

\subsubsection{Reflective-Refractive Amplitude Coupling}
Note that the energy of the incident signals is split by both reflective and refractive parts via the IOS. For convenience, we denote $\Gamma_t^l$ and $\Gamma_r^l$ as the amplitudes of the reflective and refractive signals with respect to the $l$-th element, satisfying that
\begin{equation}\label{amplitude}
\setlength{\abovedisplayskip}{0pt}
\setlength{\belowdisplayskip}{0pt}
(\Gamma_t^l)^2+(\Gamma_r^l)^2=1, 
\end{equation}
which implies the amplitude coupling between reflective and refractive signals.

\subsubsection{Reflective-Refractive Phase Coupling}
Define $\phi_t^l$ and $\phi_r^l$ as the reflection and refraction phase of the $l$-th element respectively.
We assume that the $l$-th IOS element can be encoded to perform $2^b$ potential phases to reflect/refract the radio wave, i.e., $\phi_t^l$ and $\phi_r^l$ $\in \{0,\frac{\pi}{2^{b-1}},...,\frac{(2^{b-1}-1)\pi}{2^{b-1}}\}$.
Due to the hardware limitations, for the same IOS element, $\phi_t^l$ and~$\phi_r^l$ are usually coupled with each other~\cite{phasecouple}
\begin{equation}\label{phase}
\setlength{\abovedisplayskip}{0pt}
\setlength{\belowdisplayskip}{0pt}
\phi_t^l-\phi_r^l=c,
\end{equation}
where $c$ is a constant related to the IOS structure.
Therefore, the reflection and refraction coefficients of the $l$-th element can be expressed by 
$\Psi_t^l=\Gamma_t^l e^{j\phi_t^l}, \Psi_r^l=\Gamma_r^l e^{j\phi_r^l}$.

\vspace{-1.2em}
\subsection{Channel Model}
\vspace{-0.5em}
\subsubsection{Channel between the BS and the IOS}
Considering the characteristic of spherical wave~\cite{nearfield}, the channel between the $l$-th element of IOS and the $n_b$-th antenna of the BS, i.e., the $(l,n_b)$-th element of the channel between the BS and the IOS~$\mathbf{H}_{BI}\in\mathbb{C}^{L\times N_b}$ can be expressed as~\cite{multibeam}
\begin{equation}\label{Hbr element}
\setlength{\abovedisplayskip}{0pt}
\setlength{\belowdisplayskip}{0pt}
[\mathbf{H}_{BI}]_{l,n_b}=\sqrt{\frac{1}{4\pi d_{l,n_b}^{2}}}e^{-j\frac{2\pi}{\lambda}d_{l,n_b}},
\end{equation}
where $d_{l,n_b}$ represents the distance from the $n_b$-th antenna to the $l$-th IOS element.
\subsubsection{Channel between the IOS and the user}
The electromagnetic radiation fields surrounding the IOS can be divided into far-field and near-field regions.
Denote the distance from IOS to user $k$ as $d_k$, then the user's channel model can be divided into two categories according to  $d_k$, i.e., the far-field and near-field channel model.

When user $k$ is in the near field, the waves between the IOS and the $k$-th user need to be accurately modeled according to the electromagnetic characteristic of \emph{spherical waves}~\cite{nearfield}.
The channel between the $l$-th element of IOS and the $n_u$-th antenna of user $k$ can be expressed as~\cite{multibeam}
\begin{equation}\label{Hbr element}
\setlength{\abovedisplayskip}{0pt}
\setlength{\belowdisplayskip}{0pt}
[\mathbf{H}_{IU,k}]_{n_u,l}=\sqrt{\frac{1}{4\pi d_{k,l,n_u}^{2}}}e^{-j\frac{2\pi}{\lambda}d_{k,l,n_u}},
\end{equation}
where $d_{k,l,n_u}$ represents the distance from the $l$-th IOS element to the $n_u$-th antenna of user $k$.

When user $k$ is in the far-field region, the waves between the IOS and the $k$-th user can be approximately modeled by \emph{planar waves}.
The channel between the $l$-th element of IOS and the $n_u$-th antenna of user $k$, i.e., the $(n_u,l)$-th element of the channel $\mathbf{H}_{IU,k}\in\mathbb{C}^{N_u\times L}$ can be expressed as~\cite{farcodebook}
\begin{equation}\label{HIU element}
\setlength{\abovedisplayskip}{0pt}
\setlength{\belowdisplayskip}{0pt}
[\mathbf{H}_{IU,k}]_{n_u,l}=\sqrt{\frac{1}{4\pi  d_k^2}}[\mathbf{u}(\theta_{k}^U,\psi_{k}^U)]_{n_u}[\mathbf{a}(\theta_{k}^I,\psi_{k}^I)]_{l},
\end{equation}
where $d_k$ represents the distance from the center of IOS to user~$k$ and $\theta_{k,j}^U$ represents the channel angle-of-arrival associated with user $k$.
The notations $\mathbf{u}(\theta_{k}^U,\psi_{k}^U)$ and $\mathbf{a}(\theta_{k}^I,\psi_{k}^I)$ represent the channel steering vectors associated with the user $k$ and IOS.
For example, $\mathbf{a}(\theta_{k}^I,\psi_{k}^I)$ is given by
\begin{equation}\label{Hbr element}
\begin{aligned}
\mathbf{a}(\theta_{k}^I,\psi_{k}^I)=&[1, e^{-j\frac{2\pi}{\lambda}\mu},...,e^{-j\frac{2\pi}{\lambda}(L^h-1)\mu}]^T\otimes \\&[1, e^{-j\frac{2\pi}{\lambda}\nu},...,e^{-j\frac{2\pi}{\lambda}(L^v-1)\nu}]^T,
\vspace{0em}
\end{aligned}
\end{equation}
where $\mu=d_h\cos\theta_{k}^I\sin\psi_{k}^I$ and $\nu=d_v\sin\theta_{k}^I$.
$\psi_{k}^I$ and~$\theta_{k}^I$ respectively represent the physical angles in the azimuth and
elevation at the IOS for channel $\mathbf{H}_{IU,k}$.
\vspace{-0.4em}
\subsection{Signal Model}
\vspace{-0.2em}
With the IOS based analog beamformer $\mathbf{\Psi}\in\mathbb{C}^{L\times L}$, the $k$-th user's received signal from the BS can be expressed as
\begin{equation}\label{signal}
\setlength{\abovedisplayskip}{0pt}
\setlength{\belowdisplayskip}{0pt}
y_k = \mathbf{w}_k^H\mathbf{H}_{IU,k}\mathbf{\Psi}\mathbf{H}_{BI}\mathbf{V}\mathbf{x}+\mathbf{w}_k^H\mathbf{n}_k,
\end{equation}
where $\mathbf{w}_k$ is the analog combiner at each user $k$. 
The vector $\mathbf{x}\in\mathbb{C}^{K\times 1}$ and $\mathbf{n}_k\sim\mathcal{C}\mathcal{N}(0,\sigma^2\mathbf{I}_{N_u})$ represent the transmit signals and noise, respectively.
The $(l,l)$-th element of the diagonal matrix $\mathbf{\Psi}\in\mathbb{C}^{L\times L}$ is $\Psi^l=\{\Psi_t^l, \Psi_r^l\}$.

The direction of the reflective or refractive signals is determined by the phase of each IOS element and the direction of the arrival signal from the BS concurrently. Considering such a coupling between the BS and the IOS, the digital precoder $\mathbf{V}$ and the IOS based analog beamformer $\mathbf{\Psi}$ should be designed jointly, expressed by
\begin{equation}\label{beamformer}
\setlength{\abovedisplayskip}{0pt}
\setlength{\belowdisplayskip}{0pt}
\mathbf{Q} = \mathbf{\Psi}\mathbf{H}_{BI}\mathbf{V},
\end{equation}
As such, the received signal in (\ref{signal}) can be rewritten as
\begin{equation}
\begin{aligned}
y_{k} & =\mathbf{w}_{k}^{H} \mathbf{H}_{I U, k} \mathbf{Q x}+\mathbf{w}_{k}^{H} \mathbf{n}_{k}, \\
& =\mathbf{w}_{k}^{H} \mathbf{H}_{I U, k}(\mathbf{q}_{k} x_{k}+\sum_{k^{\prime} \neq k} \mathbf{q}_{k^{\prime}} x_{k^{\prime}})+\mathbf{w}_{k}^{H} \mathbf{n}_{k},
\end{aligned}
\end{equation}
where $\mathbf{q}_{k} = \mathbf{\Psi}\mathbf{H}_{BI}\mathbf{v}_k$ and $\mathbf{v}_k$ are the $k$-th column of the IOS aided beamformer $\mathbf{Q}$ and the digital precoder $\mathbf{V}$, respectively.
\vspace{-1.6em}
%%%%%%%%%%%%%%%%%%%%%%%%%%%%%%%%%%%%%%%%%%%%%%%
\section{Problem Formulation\label{sec:scheme}}%%%%%%%%
%%%%%%%%%%%%%%%%%%%%%%%%%%%%%%%%%%%%%%%%%%%%%%%
\vspace{-0.1em}
In this section, we formulate the received power maximization problem and decompose it into codebook design, beam training, and beamforming subproblems.
\vspace{-0.4em}
\subsection{Problem Formulation}
To avoid acquiring perfect CSI, the codebook $\mathcal{Q}$=$[\mathbf{q}_1, \mathbf{q}_2, ...]$ consisting of both near/far-field codewords is set at the BS to generate the IOS aided beamformer $\mathbf{Q}$, while the codebook $\mathcal{W}$=$[\mathbf{w}_1, \mathbf{w}_2, ...]$ is set at the users to generate the analog combiner $\mathbf{w}_k$ via beam training.
Each codeword in $\mathcal{Q}$ can be used to align either a point in the near field or a direction in the far field, and the IOS aided beamformer $\mathbf{Q}$ designed by beam training can serve all users in both near and far fields simultaneously.

During the beam training, the BS broadcasts to all the users simultaneously using the codewords in $\mathcal{Q}$ and $\mathcal{W}$ sequentially.
After that, the optimal codewords $\mathbf{q}_{k}^*$ and $\mathbf{w}_{k}^*$ for user $k$ can be selected to design the IOS aided beamformer $\mathbf{Q}$ at the BS and the analog combiner at each user $k$, respectively, i.e., $\mathbf{q}_k = \mathbf{q}_{k}^*$ and $\mathbf{w}_k = \mathbf{w}_{k}^*$.
Finally, we use the IOS aided beamformer $\mathbf{Q}$ to configure the digital precoder $\mathbf{V}$ and the IOS based analog beamformer $\mathbf{\Psi}$.
The problem can be formulated as
\begin{subequations}\label{problem}
%\footnotesize
\setlength{\abovedisplayskip}{0pt}
\setlength{\belowdisplayskip}{2pt}
\begin{align}
\max\limits_{\mathbf{v}_k,\mathbf{w}_k, \mathbf{\Psi}}&\gamma_k,\\
\text{s.t.}&\gamma_k =
\begin{cases}
0, \mbox{if }\mathbf{c}_k \notin \mathscr{C}(\mathbf{q}_{k}^{*}) \cap \mathscr{C}(\mathbf{w}_{k}^{*}),\\
|(\mathbf{w}_{k}^{*})^{H} \mathbf{H}_{I U, k} \mathbf{q}_{k}^{*}|^{2}, \mbox{ otherwise},
\end{cases}\\
&\mathbf{\Psi} \mathbf{H}_{B I} \mathbf{v}_{k}=\mathbf{q}_{k}^{*} \in \mathcal{Q} \text { and } \mathbf{w}_{k}=\mathbf{w}_{k}^{*} \in \mathcal{W} ,\\
&\text{Tr}(\mathbf{V}^H\mathbf{V})\leq P_T, \\
&\eqref{amplitude} \text{ and } \eqref{phase}
\end{align}
\end{subequations}
where $\mathbf{c}_k$ represents the coordinates of user $k$.
Constraint~(10b) is the codebook design principle which implies the designed codebook should satisfy that the power gain of the received signal $\gamma_k$ is zero if the user is out of the coverage of the designed codewords, i.e., $\mathscr{C}(\mathbf{q}_{k}^{*})$ and $\mathscr{C}(\mathbf{w}_{k}^{*})$. Constraint (10c) that indicates we should configure the IOS based analog beamformer $\mathbf{\Psi}$ and the digital precoder $\mathbf{v}_k$ to approach the optimal codeword $\mathbf{q}_{k}^{*}$. Constraint (10d) is the power constraint for the BS, where $P_T$ is the transmit power budget available at the BS.
\vspace{-1em}
\subsection{Problem Decomposition}
We first design the codebooks $\mathcal{Q}$ and $\mathcal{W}$, and then select the optimal codewords $\mathbf{q}_k^{*}$ and $\mathbf{w}_k^{*}$ via beam training.
Given the beam training results, we configure the IOS based analog beamformer $\mathbf{\Psi}$ and the digital precoder $\mathbf{V}$ to approach the IOS aided beamformer $\mathbf{Q}$. Problem \eqref{problem} is thus decoupled into three subproblems below.

\subsubsection{Codebook Design Subproblem}
The coverage of the BS and IOS is equally discretized into $P$ areas and covered by codebook $\mathcal{Q}$, where the codewords can cover both the near and far regions of the IOS.
We now focus on the design of each codeword $\mathbf{q}$.
Denote all the indices of the areas in coverage of $\mathbf{q}$ by the collection $\mathscr{P(\mathbf{q})}$, the power gain at the $p$-th area with $\mathbf{q}$ should satisfy that
\begin{subequations}\label{sub1}
\setlength{\abovedisplayskip}{0pt}
\setlength{\belowdisplayskip}{2pt}
\begin{align}
\min\limits_{\mathbf{q} \in \mathcal{Q}} &\sum_{p=1}^{P} |\mathbf{h}_{p} \mathbf{q}-\eta|,\\
\text{s.t.}&\eta =
\begin{cases}
C, \mbox{if } p \in \mathscr{P}(\mathbf{q}),\\
0, \mbox{ otherwise}, 
\end{cases}
\end{align}
\end{subequations}
where $\mathbf{h}_{p}$ represents the equivalent channel between the IOS and the $p$-th area. It implies that the power gain at the $p$-th area can reach a constant $C$ if the $p$-th area is in the coverage of $\mathbf{q}$ and becomes $0$ if out of the coverage
\footnote{We employ a conventional far-field codebook at users as $\mathcal{W}$ whose details are not reiterated here~\cite{farcodebook}.}.
\subsubsection{Beam Training Subproblem}
By performing beam training, the optimal codewords $\mathbf{q}_{k}^{*}\in \mathcal{Q}$ and $\mathbf{w}_{k}^{*}\in \mathcal{W}$ for each user~$k$ that maximize the received power are obtained, expressed by
\begin{subequations}\label{sub2}
\setlength{\abovedisplayskip}{0pt}
\setlength{\belowdisplayskip}{2pt}
\begin{align}
\max\limits_{\mathbf{q}_{k}^{*}, \mathbf{w}_{k}^{*}} &\gamma_k,\\
\text{s.t.}&\mathbf{q}_{k}^{*}\in \mathcal{Q} \text { and } \mathbf{w}_{k}^{*}\in \mathcal{W}. 
\end{align}
\end{subequations}
\subsubsection{Beamforming Subproblem}
Given the beam training results, the IOS aided beamformer and analog combiner at each user~$k$ can be designed as $\mathbf{Q}$=$[\mathbf{q}_{1}^{*},...,\mathbf{q}_{K}^{*}]$ and $\mathbf{w}_{k}=\mathbf{w}_{k}^*$, respectively.
Then we use the IOS aided beamformer $\mathbf{Q}$ to configure the IOS based analog beamformer $\mathbf{\Psi}$ and the digital precoder $\mathbf{V}$ by solving the following subproblem, i.e.,
\begin{subequations}\label{sub3}
\setlength{\abovedisplayskip}{0pt}
\setlength{\belowdisplayskip}{2pt}
\begin{align}
\min\limits_{\mathbf{v}_k,\mathbf{\Psi}} &f\left(\mathbf{v}_{k}, \mathbf{\Psi}\right)=\sum_{k=1}^{K}\left|\mathbf{q}_{k}^{*}-\mathbf{\Psi} \mathbf{H}_{B I} \mathbf{v}_{k}\right|^{2},\\
\text{s.t.}&\text{Tr}(\mathbf{V}^H\mathbf{V})\leq P_T,\\
&\eqref{amplitude} \text{ and } \eqref{phase}
\end{align}
\end{subequations}

\begin{figure}[t]
\setlength{\abovecaptionskip}{0pt}
\setlength{\belowcaptionskip}{0pt}
	\centering
    \includegraphics[width=0.38\textwidth]{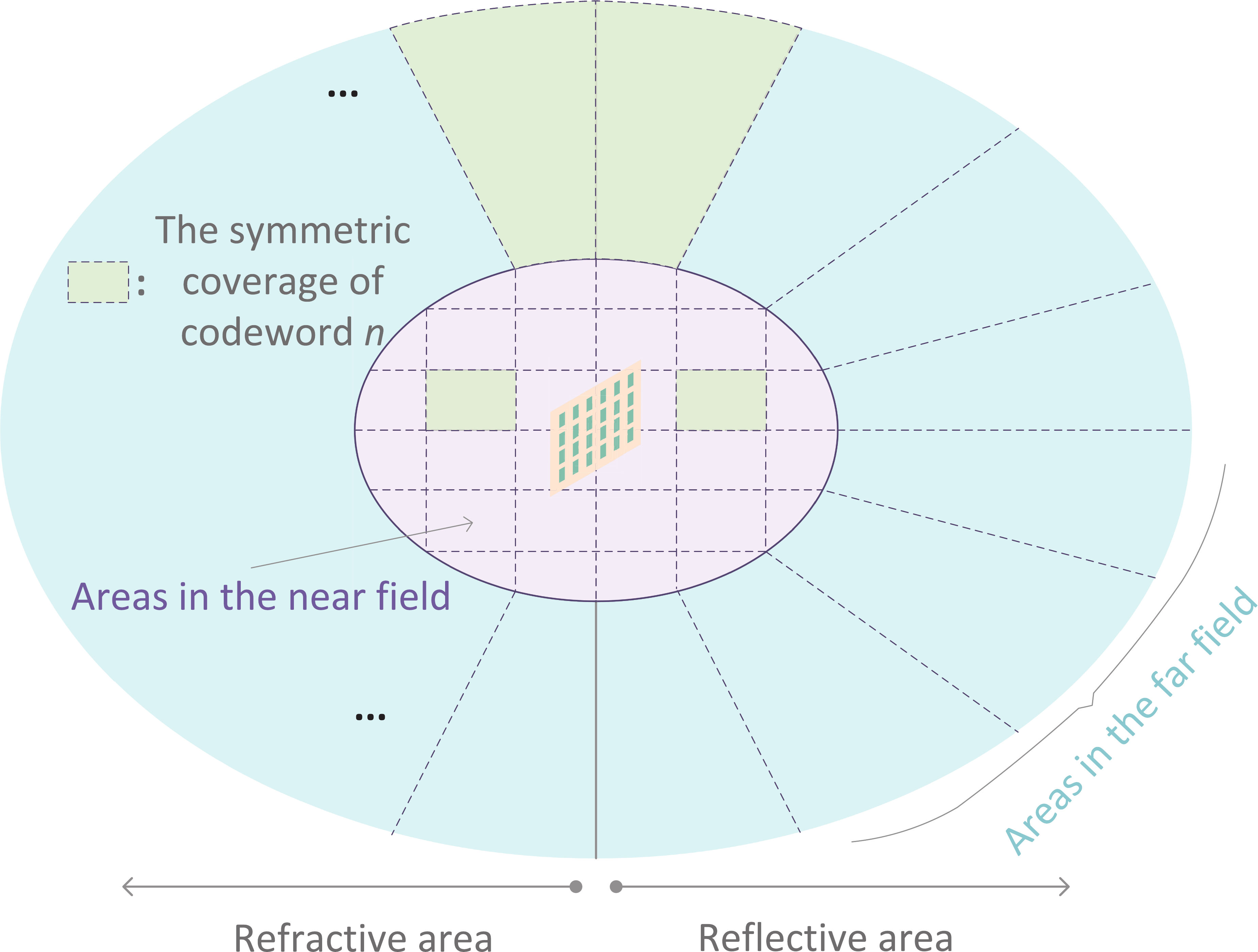}
	\caption{The coverage of each codeword in the near-far field codebook.}
	\label{Fig:codeword}
\vspace{0em}
\end{figure}
\vspace{-1em}
\section{Near-Far Field Codebook and Multi-User Beam Training Mechanism Design}\label{Codebook}
\vspace{-0.5em}
In this section, we first design a near-far field codebook to solve the subproblem~\eqref{sub1}.
To reduce the training overhead, the multi-user beam training mechanism is then designed to cope with the subproblem~\eqref{sub2} which can be performed for all users simultaneously.
Finally, given the training results, we configure the IOS based analog beamformer $\mathbf{\Psi}$ and the digital precoder $\mathbf{V}$ by solving the subproblem (13).
\vspace{-0.6em}
\subsection{Near-Far Field Codebook Design}
\vspace{-0.4em}
As shown in Fig.~\ref{Fig:codeword}, to serve users in both near and far regions, we design a near-far field codebook where each codeword aligns either a point in the near field or a direction in the far field.
The coverage of the BS and IOS is equally discretized into $P$ areas, and the equivalent channel between the IOS and the $p$-th area in the far field and that in the near field can be expressed~as

\vspace{-1em}
\begin{small}
\begin{equation}\label{channelp}
\mathbf{h}_p=
\begin{cases}
\sqrt{\frac{1}{4\pi d_p^2}}e^{-j\frac{2\pi}{\lambda}}\mathbf{a}^T(\theta_p,\psi_p), \mbox{if }d_p \geq \frac{2d^2}{\lambda},\\
[\sqrt{\frac{1}{4\pi d_{p,1}^{2}}}e^{-j\frac{2\pi}{\lambda}d_{p,1}},...,\sqrt{\frac{1}{4\pi d_{p,L}^{2}}}e^{-j\frac{2\pi}{\lambda}d_{p,L}}]^T,\mbox{ otherwise},
\end{cases}\\
\end{equation}
\end{small}where $d_p$ is the distance between the $p$-th area and center of the IOS.
$d_{p,l}$ represents the distance from the $l$-th element of the IOS to the $p$-th area.

Let the vector $\mathbf{u}\in\mathbb{C}^{P\times 1}$ indicate whether the $p$-th area is in the set $\mathscr{P}(\mathbf{q})$, the codeword $\mathbf{q}$ can be calculated by 
\vspace{-0.5em}
\begin{equation}\label{codeword}
\mathbf{q}=C \hat{\mathbf{H}}^{H}\left(\hat{\mathbf{H}} \hat{\mathbf{H}}^{H}\right)^{-1} \mathbf{u}
\vspace{-0.5em}
\end{equation}
where $\hat{\mathbf{H}}=[\mathbf{h}_1,...,\mathbf{h}_P]^T$.

According to the reflective-refractive phase coupling in~\eqref{phase}, we derive \emph{Proposition}~\ref{analysis} to show the relationship between the reflective and refractive signals.
Based on \emph{Proposition}~\ref{analysis}, each codeword simultaneously covers the reflective areas $\{p_{t_1},...,p_{t_n}\}$ and the refractive areas $\{p_{r_1},...,p_{r_n}\}$, where $\{p_{t_1},...,p_{t_n}\}$ and $\{p_{r_1},...,p_{r_n}\}$ are symmetrical about the IOS.
Considering such a reflective-refractive coupling, the codebook size can be halved, reducing the beam training overhead.
\begin{prop}\label{analysis}
Due to the dual functionality of signal reflection and refraction in~\eqref{phase}, when the IOS reflects the incident signal towards the direction $\phi_t$ in the far field or the point $(x_t,y_t,z_t)$ in the near field, a symmetrical refractive beam is simultaneously constructed which aligns with the direction $\phi_r=-\phi_t$ in the far field or the point $(x_r,y_r,z_r)=(x_t,-y_t,z_t)$ in the near field.
\end{prop}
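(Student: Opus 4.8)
The plan is to show that the phase coupling \eqref{phase} forces the refraction coefficient vector to be a global-phase replica of the reflection coefficient vector, and then to exploit a purely geometric symmetry of the IOS plane to transfer the coherent-combining condition from the reflective target to its mirror image.

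First I would record the element-level consequence of \eqref{phase}: since $\phi_r^l=\phi_t^l-c$ with $c$ common to all elements, the coefficients obey $\Psi_r^l=e^{-jc}(\Gamma_r^l/\Gamma_t^l)\,\Psi_t^l$. Taking the amplitude split in \eqref{amplitude} identically across elements (so $\Gamma_t^l=\Gamma_t$, $\Gamma_r^l=\Gamma_r$), this reads $\Psi_r^l=(\Gamma_r/\Gamma_t)e^{-jc}\,\Psi_t^l$, i.e. every element carries the same steering phase for refraction as for reflection, up to one global constant that affects neither the beam direction nor the location of its peak.

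Next comes the geometric core. Placing the IOS in the plane $y=0$ with element $l$ at $\mathbf{p}_l=(x_l,0,z_l)$, I would prove the distance-preservation lemma: for any target $P_t=(x_t,y_t,z_t)$ and its mirror image $P_r=(x_t,-y_t,z_t)$ one has $\lVert P_t-\mathbf{p}_l\rVert=\lVert P_r-\mathbf{p}_l\rVert$ for every $l$, because the reflected coordinate enters only through $(\pm y_t)^2$. By \eqref{channelp}, each near-field channel entry depends on the element-to-point distance alone, hence $\mathbf{h}_{P_r}=\mathbf{h}_{P_t}$ entrywise. The far-field case is the same statement for projections: for an in-plane element $y_l=0$, so $\mathbf{p}_l\cdot\hat d(\phi)$ is insensitive to the surface-normal component of the direction; flipping that component---which is exactly the map $\phi_t\mapsto\phi_r=-\phi_t$ under the angle convention of $\mathbf{a}(\theta,\psi)$---leaves the steering vector unchanged, so $\mathbf{a}(\phi_r)=\mathbf{a}(\phi_t)$.

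Finally I would combine the two ingredients. That the reflected beam points at $P_t$ (resp. direction $\phi_t$) means precisely that the phases of the terms $[\mathbf{h}_{P_t}]_l\,\Psi_t^l\,[\mathbf{H}_{BI}\mathbf{v}]_l$ are aligned across $l$. Substituting $\mathbf{h}_{P_r}=\mathbf{h}_{P_t}$ and $\Psi_r^l\propto e^{-jc}\Psi_t^l$ shows the refractive terms $[\mathbf{h}_{P_r}]_l\,\Psi_r^l\,[\mathbf{H}_{BI}\mathbf{v}]_l$ differ from the aligned reflective terms only by the common factor $(\Gamma_r/\Gamma_t)e^{-jc}$, so they too are aligned; the refracted signal therefore combines coherently at $P_r=(x_t,-y_t,z_t)$ (resp. $\phi_r=-\phi_t$), which is the claim. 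I expect the main obstacle to be the bookkeeping that pins the angle convention so that the plane-mirror reflection $y_t\mapsto-y_t$ coincides with $\phi_t\mapsto-\phi_t$ in $\mathbf{a}(\theta,\psi)$, together with justifying that a nonuniform amplitude split would shift only the peak magnitude and not its location; the phase-coupling step itself is immediate once \eqref{phase} is read as a global offset.
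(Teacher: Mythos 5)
Your proposal is correct and follows essentially the same route as the paper's proof: you factor the refraction coefficients as a global constant $(\Gamma_r/\Gamma_t)e^{-jc}$ times the reflection coefficients using \eqref{phase}, and then use the mirror symmetry of the planar IOS geometry (equal element-to-point distances in the near field, equal steering vectors for mirrored directions in the far field) to conclude the gains at symmetric locations are proportional. Your write-up is somewhat more explicit than the paper's---notably in isolating the distance-preservation lemma, flagging the angle-convention bookkeeping, and stating the uniform amplitude-split assumption that the paper uses only implicitly---but the underlying argument is the same.
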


\begin{proof}
Given the IOS aided beamformer of user $k$ according to the reflective beams $\mathbf{q}_k^t=\mathbf{\Psi}_t \mathbf{H}_{B I} \mathbf{v}_{k}$, where $\mathbf{\Psi}_t=diag(\Psi_t^1, \Psi_t^2,..., \Psi_t^l)$, the beamformer according to the refractive beams can be expressed as $\mathbf{q}_k^r=\mathbf{\Psi}_r \mathbf{H}_{B I} \mathbf{v}_{k}$ where $\mathbf{\Psi}_r=diag(\Psi_t^1, \Psi_r^2,..., \Psi_r^l)$.
Assume that there are two areas $p_t$ and $p_r$ symmetrical about the IOS, and
the equivalent channels between the IOS and two areas are $\mathbf{h}_t$ and $\mathbf{h}_r$, respectively.
The gain of the reflective beams corresponding to $p_t$ can be expressed as

\begin{small}
\vspace{-1em}
\begin{equation}\label{prop1}
\begin{aligned}
B(\mathbf{q}_{k}^t, \mathbf{h}_t)&=|\mathbf{h}_t\mathbf{q}_{k}^t|\\
&=|\mathbf{h}_t \mathbf{\Psi}_t\mathbf{H}_{B I} \mathbf{v}_{k}|\\
&=|\mathbf{h}_tdiag(\Psi_t^1, \Psi_t^2,..., \Psi_t^L) \mathbf{H}_{B I} \mathbf{v}_{k}|\\
&=|\mathbf{h}_tdiag(\Gamma_t^1 e^{j\phi_t^1}, \Gamma_t^2 e^{j\phi_t^2},..., \Gamma_t^L e^{j\phi_t^L}) \mathbf{H}_{B I} \mathbf{v}_{k}|.\\
\end{aligned}
\end{equation}
\end{small}

The gain of the refractive beams corresponding to $p_r$ can be expressed as

\begin{small}
\vspace{-1em}
\begin{equation}\label{prop2}
\begin{aligned}
B(\mathbf{q}_{k}^r, \mathbf{h}_r)&=|\mathbf{h}_r\mathbf{q}_{k}^t|\\
&=|\mathbf{h}_r \mathbf{\Psi}_r\mathbf{H}_{B I} \mathbf{v}_{k}|\\
&=|\mathbf{h}_rdiag(\Psi_r^1, \Psi_r^2,..., \Psi_r^L) \mathbf{H}_{B I} \mathbf{v}_{k}|\\
&=|\mathbf{h}_rdiag(\Gamma_r^1 e^{j\phi_r^1}, \Gamma_r^2 e^{j\phi_r^2},..., \Gamma_r^L e^{j\phi_r^L}) \mathbf{H}_{B I} \mathbf{v}_{k}|\\
&=|\mathbf{h}_rdiag(\Gamma_r^1 e^{j(\phi_t^1-c)},..., \Gamma_t^L e^{j(\phi_t^L-c)}) \mathbf{H}_{B I} \mathbf{v}_{k}|\\
&=|e^{-jc}\mathbf{h}_rdiag(\Gamma_r^1 e^{j\phi_t^1}, \Gamma_t^2 e^{j\phi_t^2},..., \Gamma_t^L e^{j\phi_t^L}) \mathbf{H}_{B I} \mathbf{v}_{k}|.\\
\end{aligned}
\end{equation}
\end{small}

\begin{itemize}
\item Far field: When $p_t$ and $p_r$ are in the far-field region, $\mathbf{h}_t=\gamma_{p_t}\mathbf{a}(\theta_{p_t},\psi_{p_t})$ and $\mathbf{h}_r=\gamma_{p_r}\mathbf{a}(\theta_{p_r},\psi_{p_r})$, where $\gamma$ represents the path loss.
We can find that $\frac{B(\mathbf{q}_{k}^t, \mathbf{h}_t)}{B(\mathbf{q}_{k}^r, \mathbf{h}_r)}=\frac{\gamma_{p_t}\Gamma_t}{\gamma_{p_r}\Gamma_r}$ if $\theta_{p_t}=\theta_{p_r}$ and $\psi_{p_t}=-\psi_{p_t}$.
\item Near field: When $p_t$ and $p_r$ are in the near-field region, $\mathbf{h}_t=[\sqrt{\frac{1}{4\pi (d_t^{1})^{2}}}e^{-j\frac{2\pi}{\lambda}d_t^{1}},...,\sqrt{\frac{1}{4\pi (d_t^{l})^{2}}}e^{-j\frac{2\pi}{\lambda}d_t^{l}}]$ and $\mathbf{h}_r=[\sqrt{\frac{1}{4\pi (d_r^{1})^{2}}}e^{-j\frac{2\pi}{\lambda}d_r^{1}},...,\sqrt{\frac{1}{4\pi (d_r^{l})^{2}}}e^{-j\frac{2\pi}{\lambda}d_r^{l}}]$, where $d_t^i=\sqrt{(x_t-x_i)^2+(y_t)^2+(z_t-z_i)^2}$ and $d_r^i=\sqrt{(x_r-x_i)^2+(y_r)^2+(z_r-z_i)^2}$.
We can derive that $\frac{B(\mathbf{q}_{k}^t, \mathbf{h}_t)}{B(\mathbf{q}_{k}^r, \mathbf{h}_r)}=\frac{\Gamma_t}{\Gamma_r}$ if $x_t=x_r$, $y_t=-y_r$, and $z_t=z_r$.
\end{itemize}
Based on~\eqref{prop1} and~\eqref{prop2}, we get that the reflective and refractive signals are symmetrical about the IOS in both near and far fields.
\vspace{-0.5em}
\end{proof}

Given the codeword $\mathbf{q}$, the digital precoder~$\mathbf{V}$ is initialized with $k$ same columns.
The IOS based analog beamformer can be calculated by $\mathbf{\Psi}_t=diag\left\{\Gamma_t e^{-j \angle \mathbf{q} \oslash\left(\mathbf{H}_{B I} \mathbf{v}_{k}\right)}\right\}$ and $\mathbf{\Psi}_r=diag\{\Gamma_r^1 e^{j(\phi_t^1-c)}, \Gamma_t^2 e^{j(\phi_t^2-c)},..., \Gamma_t^L e^{j(\phi_t^L-c)}\}$.
The codebook design can be summarized in \emph{Algorithm}~1.

\begin{algorithm}[t]
\label{alg:training}
%\footnotesize
\caption{Near-Far Field Codebook Design}
\LinesNumbered
\KwIn{The coverage $\mathscr{P}$ of the codeword $\mathbf{q}$;}
\KwOut{The codeword $\mathbf{q}$, the IOS based analog beamformer $\mathbf{\Psi}$ and the digital precoder $\mathbf{V}$ corresponding to $\mathbf{q}$;}
Construct the equivalent channel between the IOS and each area $p$ according to~\eqref{codeword}\;
Calculate the codeword  $\mathbf{q}$ via~\eqref{codeword}\;
Initialize the digital precoder $\mathbf{V}$ with $\mathbf{v}_1=..=\mathbf{v}_K$\;
Obtain the IOS based analog beamformer $\mathbf{\Psi}_t=diag\left\{\Gamma_t e^{-j \angle \mathbf{q} \oslash\left(\mathbf{H}_{B I} \mathbf{v}_{k}\right)}\right\}$ and $\mathbf{\Psi}_r=diag\{\Gamma_r^1 e^{j(\phi_t^1-c)}, \Gamma_t^2 e^{j(\phi_t^2-c)},..., \Gamma_t^L e^{j(\phi_t^L-c)}\}$.
\end{algorithm}
\vspace{-0.8em}
\subsection{Multi-User Beam Training Mechanism}
\vspace{-0.3em}
During the beam training, user $k$ receives the signal according to each codeword $\mathbf{q}_{n}$, and finally regards the codeword with the largest received power~$\gamma_k$ as the optimal codeword $\mathbf{q}_{k}^*$.
Traditionally, the codewords in the codebook $\mathcal{Q}$ are performed sequentially at the BS, requiring large spectrum and time resources for beam training.
To deal with this problem, we carry out the \emph{multi-user beam training mechanism} where the designed codebook is searched based on a binary tree and each codeword is generated to cover multiple areas~\cite{multibeam}.
As shown in Fig.~\ref{Fig:Hierarchical}, except for the bottom layer, each layer consists of two codewords each covering half of the total $P$ areas.
Now we dwell on the multi-user beam training process.

\begin{figure}[t]
\vspace{-1em}
\setlength{\abovecaptionskip}{0pt}
\setlength{\belowcaptionskip}{0pt}
	\centering
    \includegraphics[width=0.42\textwidth]{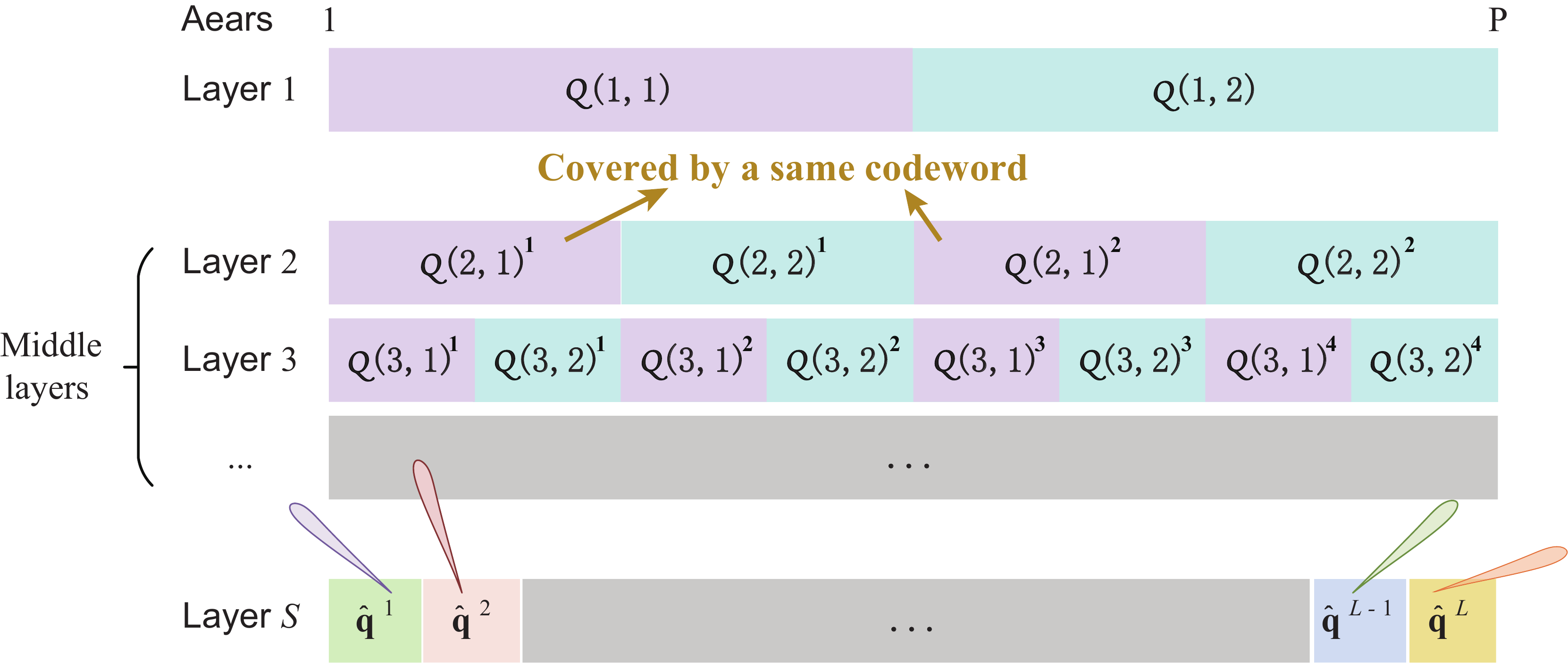}
	\caption{Hierarchical multi-user beam training mechanism.}
	\label{Fig:Hierarchical}
\end{figure}

\subsubsection{The upper layers}
In each layer $1\le s \le S-1$, we design two codewords $\mathcal{Q}(s,1)$ and $\mathcal{Q}(s,2)$ according to \emph{Algorithm} 1.
The beam training is performed for each user simultaneously in every layer, as described below.
\begin{itemize}
\item Partition of the areas: The coverage of each codeword in the $(s-1)$-th layer is evenly partitioned into two groups, i.e.,  $\mathscr{P}(s, 1)$ and $\mathscr{P}(s, 2)$.
For example, the coverage of the codewords $\mathcal{Q}(1,1)$ and $\mathcal{Q}(1,2)$ in the top layer is $\mathscr{P}(1,1)=   \left[1, \ldots, \frac{P}{2}\right]$  and  $\mathscr{P}(1,2)=\left[\frac{P}{2}+1, \ldots, P\right]$, respectively.
In the second layer, the coverage of $\mathcal{Q}(2,1)$ is $\mathscr{P}(2,1)= \left[1, \ldots, \frac{P}{4}\right] \cup\left[\frac{P}{2}+1, \ldots, \frac{3 P}{4}\right]$, consisting of part of the coverage of $\mathcal{Q}(1,1)$ and $\mathcal{Q}(1,2)$.
Similarly, the coverage of $\mathcal{Q}(2,2)$ is $\mathscr{P}(2,2)=\left[\frac{P}{4}+1, \ldots, \frac{P}{2}\right] \cup\left[\frac{3 P}{4}+1, \ldots, P\right]$.

\item Design the codewords: Given $\mathscr{P}(s, 1)$  and  $\mathscr{P}(s, 2)$, the codewords  $\mathcal{Q}(s, 1)$ and $\mathcal{Q}(s, 2)$ can be obtained by~\emph{Algorithm} 1.
\item Perform beam training: The codewords $\mathcal{Q}(s, 1)$ and $\mathcal{Q}(s, 2)$ are performed sequentially at the BS.
Each user $k$ receives the signal and records the index of the codeword with the larger received power $\tau_{s}^{k} \in\{1,2\}$.
The spatial region index $\beta_{s}^{k}$ of user $k$ in the $s$-th layer can be calculated by
\begin{equation}
\vspace{-0.3em}
\beta_{s}^{k}=2\beta_{s-1}^{k}+\tau_{s}^{k}-2.
\vspace{-0.3em}
\end{equation}
After training in the $(S-1)$-th layer, the results $\mathscr{B}_{S-1}=[\beta_{S-1}^1,...,\beta_{S-1}^K]$ are fed back to the BS.
\end{itemize}

\subsubsection{The bottom layer}
Different from the upper layers, beam training is performed for users one by one in the bottom layer.
Given the training results of the upper layers $\mathscr{B}_{S-1}$, the coverage of the optimal codeword for each user $k$ can be locked in two areas, expressed $\beta_{S}^{k} \in \{2\beta_{S-1}^{k}-1,2\beta_{S-1}^{k}\}$.
By performing these codewords sequentially, user $k$ obtains the optimal codeword and feeds its index $\beta_{S}^{k}$ back to the BS.
Then the optimal codeword $\mathbf{q}_{k}^{*}$ of user $k$ can be generated by~\eqref{codeword}.
After that, the IOS aided beamformer can be designed as $\mathbf{Q}$=$[\mathbf{q}_{1}^{*},...,\mathbf{q}_{K}^{*}]$.
\vspace{-0.4em}
\subsection{IOS aided Beamforming Scheme Design}
\vspace{-0.4em}
Given the training results, we configure the IOS based analog beamformer $\mathbf{\Psi}$ and the digital precoder $\mathbf{V}$ at the BS to solve the beamforming subproblem~\eqref{sub3}, where $f(\mathbf{v}_{k}, \mathbf{\Psi})$ can be expressed as
\begin{equation}
f(\mathbf{v}_{k}, \mathbf{\Psi})=a-2\text{Re}{\mathbf{\{b \psi\}}},
\vspace{-0.3em}
\end{equation}
where $\mathbf{\psi}$ consists of the diagonal elements of $\mathbf{\Psi}$.
The notation $a=\sum_{k}\left({(\mathbf{q}_{k}^{*}})^{H} \mathbf{q}_{k}^{*}+\mathbf{v}_{k}^{H} \mathbf{H}_{B I}^{H} \mathbf{H}_{B I} \mathbf{v}_{k}\right)$ and $\mathbf{b}=\sum_{k} (\mathbf{q}_{k}^{*})^{H} \operatorname{diag}\left\{\mathbf{H}_{B I} \mathbf{v}_{k}\right\} $.
\subsubsection{IOS based Analog Beamforming}
Without loss of generality, we assume that the energy split for the reflective and refractive signals is equal, i.e., $|\Gamma_t^l|=|\Gamma_r^l|$, and now focus on the design of the IOS phase.
Starting from a randomly initiated $\mathbf{\psi}$, the phase of each IOS element~$l$ is updated given other fixed elements iteratively.
Since the elements of Re$\{\mathbf{b \psi}\}$ are mutually independent, problem~\eqref{sub3} can be decoupled into~$L$ subproblems
\begin{equation}
\vspace{-0.3em}
\phi_t^l=\argmax\limits_{\phi_t^l} f(\phi_t^l),
\vspace{-0.3em}
\end{equation}
where $f(\phi_t^l)=b_l e^{j\phi_t^l}+b_l^{*} e^{-j\phi_t^l}$.
The notations~$b_l$ and~$\phi_t^l$ denote the $l$-th element of $\mathbf{b}$ and the IOS phase, respectively.
The phase of the $l$-th element can be calculated~by
\begin{equation}
\vspace{-0.3em}
\phi_t^l=\pi-\arctan\frac{\text{Im}\{b_l\}}{\text{Re}\{b_l\}},
\vspace{-0.3em}
\end{equation}
\subsubsection{Digital Beamforming at the BS}
The $k$-th column of the digital precoder $\mathbf{V}$ is considered as $\mathbf{v}_k=\widetilde{\mathbf{v}}_k p_k^{\frac{1}{2}}$, where $p_k$ is the transmit power allocated to the signal intended for the user $k$.
Given the IOS based analog beamformer $\mathbf{\Psi}$, $\widetilde{\mathbf{v}}_k$ can be calculated as
\begin{equation}
\vspace{-0.3em}
\tilde{\mathbf{v}}_{k}=\frac{\left[\left(\mathbf{\Psi} \mathbf{H}_{B I}\right)^{H} \mathbf{\Psi} \mathbf{H}_{B I}\right]^{-1}\left(\mathbf{\Psi} \mathbf{H}_{B I}\right)^{H} \mathbf{q}_{k}}{\left|\left[\left(\mathbf{\Psi H}_{B I}\right)^{H} \mathbf{\Psi} \mathbf{H}_{B I}\right]^{-1}\left(\mathbf{\Psi} \mathbf{H}_{B I}\right)^{H} \mathbf{q}_{k}\right|}.
\vspace{-0.3em}
\end{equation}

Therefore, the received signal for all users can be given by $\mathbf{y}=\mathbf{\widetilde{H} P x}+\mathbf{n}$, where $\mathbf{\widetilde{H}}$ and $\mathbf{P} = diag \{p_1,...,p_K\}$ denote the effective channel matrix and the transmit power matrix, respectively.
Each element of $\mathbf{\widetilde{H}}$ can be expressed by~$\widetilde{H}_{k,k'}=\mathbf{w}_k^H \mathbf{H}_{IU} \mathbf{\Psi} \widetilde{\mathbf{v}}_{k'}$.
Each $p_k$ can be obtained by water-filling, given by $p_k =\frac{1}{v_k}\max\{\frac{1}{\mu}-v_k\sigma^2,0\}$, where $v_k$ is the $k$-th diagonal element of $\widetilde{\mathbf{H}}^H\widetilde{\mathbf{H}}$ and $\mu$ is a normalized factor which is chosen to satisfy $\sum_{k} \max \left\{\frac{1}{\mu}-v_{k} \sigma^{2}, 0\right\}=P_{T}$~\cite{multibeam}.

\begin{figure*}[t]
\setlength{\abovecaptionskip}{0pt}
\setlength{\belowcaptionskip}{0pt}
\centering
\subfigure[Symmetrical beams of IOS.]
{
\label{Fig:beams}
\includegraphics[width=0.35\textwidth]{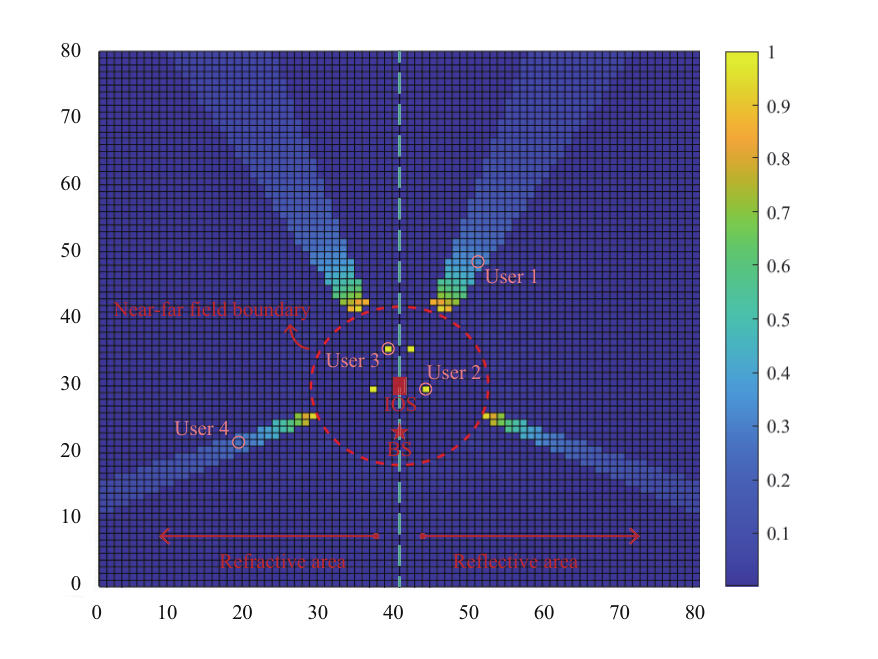}
}
\hspace{-3em}
\subfigure[Sum rate vs. SNR with different codebooks.]
{
\label{Fig:sumrate}
\includegraphics[width=0.35\textwidth]{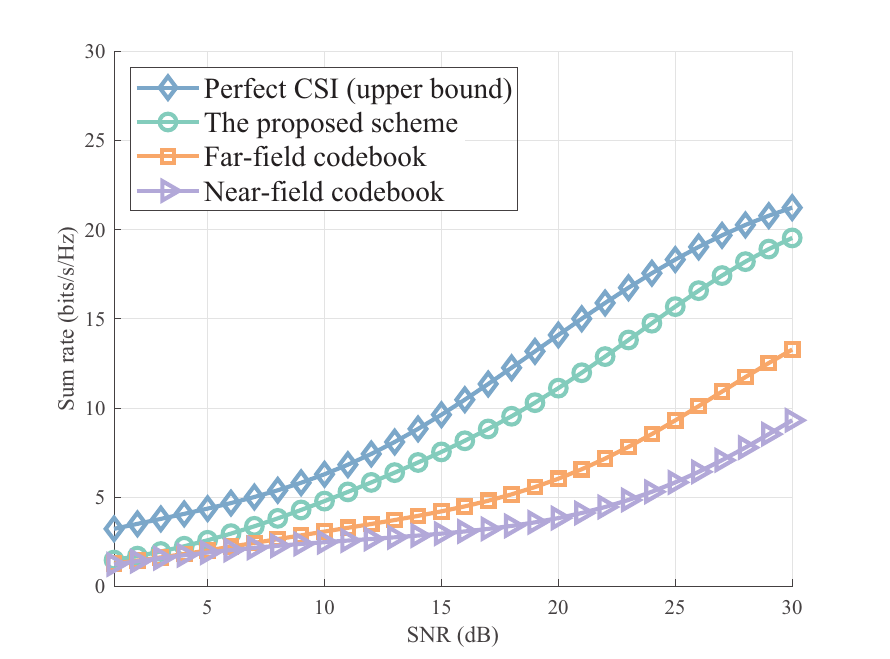}
}
\hspace{-3em}
\subfigure[Throughput vs. SNR.]
{
\label{Fig:compare}
\includegraphics[width=0.35\textwidth]{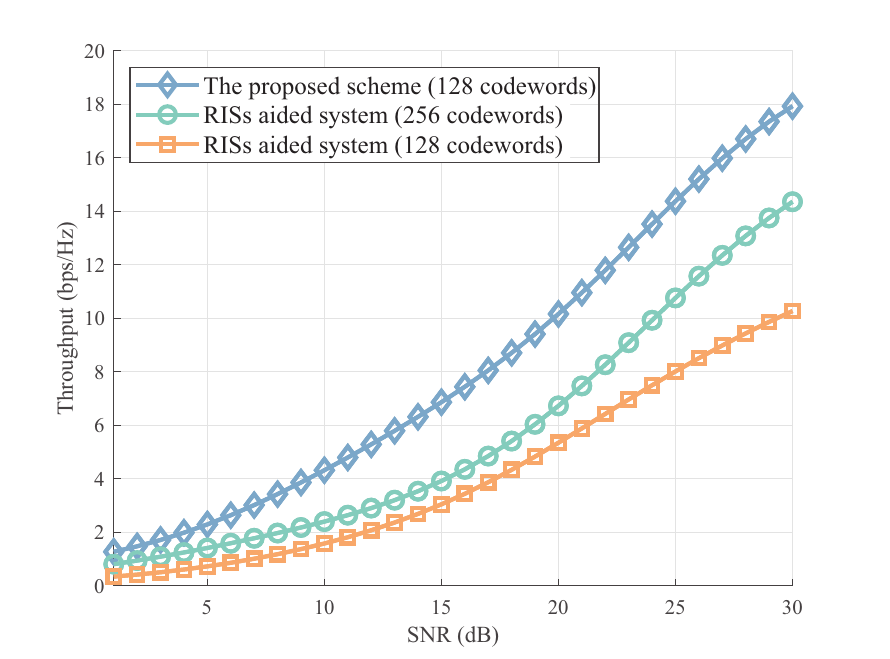}
}
\caption{Performance of the IOS aided system with the near-far field codebook and multi-user beam training.}
\vspace{-1.5em}
\end{figure*}
\vspace{-0.6em}
\section{Simulation results}
\vspace{-0.6em}
In this section, we first verify our theoretical results on symmetrical beams in~\emph{Proposition}~\ref{analysis} by simulations, and then evaluate our proposed scheme for the IOS aided system in terms of the sum rate and throughput.
We deploy a 1024-element IOS to aid a 32-antenna BS while each user is equipped with 4 antennas.
The carrier frequency is set~at~26~GHz.

As shown in Fig.~\ref{Fig:beams}, there are four users located in both near and far fields of the IOS, where user $1$ and $2$ are in the reflective area and user $3$ and $4$ are in the refractive area.
After the proposed multi-user beam training based on the near-far field codebook, the designed beams capture all users well, indicating the effectiveness of the near-far field codebook and proposed beam training scheme.
We can find that each reflective/refractive beam is generated with a beam symmetrical about the IOS in both near and far fields, which verifies~\emph{Proposition}~\ref{analysis} in Section~\ref{Codebook}.

In Fig.~\ref{Fig:sumrate}, we compare the sum rate of users with conventional codebooks.
We consider the perfect CSI case as an upper bound.
The conventional far/near-field codebook is designed by far-field channels or near-field channels with the same number of codewords in the near-far field codebook~\cite{hierarchical}.
We observe that the sum rate with different codebooks grows with the SNR.
Given the same codebook size, the sum rate of the proposed scheme is higher than that of the conventional near-field and far-field codebooks, and performs very close to the perfect CSI case.

In Fig.~\ref{Fig:compare}, two reconfigurable intelligent surfaces~(RISs), each of which has the same number of elements compared to the IOS, are placed together, serving reflective and refractive users respectively~\cite{RIS}.
Compared to RIS aided system with the same resolution of beam training, only half codewords are needed in IOS aided system benefiting from the symmetrical signals, which reduces the training overhead and brings a higher throughput in the IOS aided system.
When maintaining the same training overhead compared to the IOS aided system, the throughput of RIS aided system falls due to the low resolution of beam training which brings a descent in sum rate.
It implies by utilizing the symmetry of reflective and refractive waves, the system throughput can be~improved.
\vspace{-0.8em}
\section{Conclusions}
\vspace{-0.4em}
In this paper, we studied the codebook design and beam training for a multi-user system where an IOS is employed to achieve full-dimensional wireless communications by simultaneous signal reflection and refraction.
To fully utilize such a dual functionality, we analyzed the amplitude and phase coupling of the IOS elements theoretically, unveiling the symmetry relationship between the reflective and refractive signals of the IOS.
On this basis, we designed a near-far field codebook capable of serving users in both the near and far regions.
Based on the theoretical analysis and simulation results, we can draw the following conclusions:
1) The reflective and refractive signals have a symmetry relationship inherently in any case of reflective-refractive coupling.
2) The proposed scheme can capture users in both the near and the far fields well with low overhead and performs very close to the perfect CSI case.
3) Compared to the conventional ones, the proposed scheme can improve the sum rate and throughput.
\vspace{-0.6em}

\end{document}